\title{Posner computing: a quantum neural network model in Quipper}
\newcommand{\field}[1]{\field{#1}}
\chardef\bslash=`\\ 
\newtheorem{thm}{Theorem}[section]
\theoremstyle{definition}
\theoremstyle{remark}
\newcommand{\eval}[2][\right]{\relax
  \ifx#1\right\relax \left.\fi#2#1\rvert}
\begin{document}
\author{James L. Ulrich\thanks{Staff scientist, CyberPoint International LLC; email: jlu5@caa.columbia.edu.} }

\markboth{Posner computing: a quantum neural network model in Quipper}
{Posner computing: a quantum neural network model in Quipper}
\renewcommand{\sectionmark}[1]{}
\maketitle

\abstract{We present a construction, rendered in Quipper,  of a quantum algorithm which probabilistically computes a classical function from $n$ bits to $n$ bits. The construction is intended to be of interest primarily for the features of Quipper it highlights.  However, intrigued by the utility of quantum information processing in the context of neural networks, we present the algorithm as a simplest example of a particular quantum neural network which we first define. As the definition is inspired by recent work of Fisher concerning possible quantum substrates to cognition, we precede it with a short description of that work.}

\section*{Acknowledgement} The author wishes to thank Neil J. Ross for helpful discussions and code contributions, and Paul Li, Mark Raugas, and Riva Borbely for helpful reviewing.

\section{Introduction}
We seek in this work to add to the corpus of quantum algorithms expressed in the Quipper quantum programming language \cite{Green},  by presenting a construction of a quantum algorithm which probabilistically computes a classical function from $n$ bits to $n$ bits. Of course, since a general quantum circuit can compute exactly any algorithm constructed using classical gates \cite{Nielsen}, the algorithm is intended to be of interested primarily for the features of Quipper it highlights, and for its restriction to particular types of gates. The code itself is presented in the appendix.  However, intrigued by the utility of quantum information processing in the context of neural networks, we present it as a simplest construction of a particular quantum neural network which we define below.  

The definition, in turn, is inspired by an article by Fisher \cite{Fisher}, in which he conjectures that the phosphate ion's half-integer spin may serve as the brain's ``qubit'' (i.e., unit of quantum storage), and that pairs of such ions form spin singlet states, which are preserved inside ``Posner molecules''. These cube-shaped molecules inherit a tri-level ``pseudo-spin'' from the six phosphate ions they contain, characterizing spin eigenstate transformations under rotations along the cube diagonal. Posner molecules may bond in pairs, collapsing onto a zero total pseudo-spin state leading to release of calcium, which in turn enhances neuronal firing. Posner molecules in different neurons are posited to become entangled, producing cross-neuronal firing correlations which are quantum in origin. 

Hence, we consider a model for a feed-forward, layered neural network inspired by the described interactions of Posner molecules. The network resembles a traditional feed-forward neural network, except that in any given layer, the activation functions of one or more of the units are replaced with a single quantum operation, using a specified (not necessarily universal) gate set, and taking the units' classical inputs to their classical outputs.   The algorithm presented below may be seen as a simplest example of such a network.

\section{Related work}
From a philosophical perspective, there is a grand tradition of attempting to model cognition as an artifact of quantum information processing in the brain. In his 1989 book ``The Emperor's New Mind,'' Penrose introduced the idea that consciousness is an artifact of the gravity-induced collapse of a quantum-mechanical wave-function governing brain states, referred to as ``Orchestrated Objective Reduction (Orch-Or) '' \cite{Penrose1}. A few years later, Albert \cite{Albert} explored the non-intuitive consequences for mental belief states of the application of the Copenhagen interpretation of quantum mechanics to human observers \cite{Albert}, suggesting the standard interpretation is wrong, thereby supporting Penrose. In the following decade, work by Khrennikov \cite{Khrennikov} posited that the brain is a ``quantum-like'' computer, in that one may observe interference patterns in statistical descriptions of mental states.   An updated version of the Penrose's ``Orch-Or'' hypothesis is provided in \cite{Penrose2}, and Fisher's recent article \cite{Fisher} might reasonably be viewed as providing an alternative physical substrate for this hypothesis. A key element of this substrate is that while the processes describing neuronal inputs and outputs may be  classical, groups of neurons may coordinate firing through quantum processes.  It is this element which motivates the description of the neural network described below, which we dub a ``Posner'' network, in honor of its origin. 

In terms of extending classical neural networks to the quantum world, approaches may be found in (for example) \cite{Ricks} and \cite{Behrman}, and a survey may be found in \cite{Garman}. Quantum perceptron networks and/or perceptron networks using quantum computation in the training phase are given in (for example) \cite{Wiebe}, \cite{Zhou}, and \cite{Zidan}, and a review of some approaches may be found in \cite{Schuld}. A general framework for quantum machine learning is presented in \cite{QELM}.  In the model presented here, quantum states exist only within a network layer; the layer inputs and outputs are purely classical. Moreover, we stress that our primary aim is simply to contribute to the corpus of extant Quipper code.

\section{Posner molecule interactions}

For the interested reader, we here provide a brief summary of the aspects of \cite{Fisher} motivating the present work, and encourage the reader to consult that source for more detail. The remainder of this paper does not depend on this section, however.

The mechanism for quantum cognition described in Fisher's work proceeds more or less as follows. It starts with two phosphate ions with nuclear spin $\frac{1}{2}$ held inside a magnesium shell in extracellular fluid, and emitted in a spin singlet state $\frac{1}{2}(|0 \rangle - | 1 \rangle)$. The entangled ions are then drawn into (possibly distinct) presynaptic glutematergic neurons, where they participate in the formation of Posner molecules, each  containing six phosphorous ions. The ions, when viewed along the diagonal 3-fold symmetry axis of the molecule, form a hexagonal ring, and the associated spin Hamiltonian is given by $H = \sum_{ij} J_{ij} \vec{S}_i \cdot \vec{S}_j$ where $i,j=1,2, \cdots, 6$ label the $6$ spins, the $\vec{S}_l$ are the spin operators, and the $J_{ij}$ are exchange interactions. The $2^6$ energy eigenstates of $H$ can be labelled by their transformation properties under a 3-fold rotation, acquiring a phase factor  $e^{i \sigma 2\pi/3}$ with $\sigma = 0, +1,-1.$ Hence the molecule can be described by pseudo-spin states $|0\rangle, | 1 \rangle, \ket{-1}$, with overall molecular state of the form $| \Psi_{Pos} \rangle = \sum_\sigma c_\sigma | \psi_\sigma(\phi) \rangle | \sigma \rangle$, where $\psi_\sigma$ encodes rotational spin state for angle $\phi$ about the diagonal, given the state $\ket{\sigma}$. \cite{Fisher2}.

The state of two Posner molecules  $(a, a^\prime)$ occupying a neuron is given by
\begin{displaymath}
| \Psi_{a a^\prime} \rangle = \sum_{\sigma, \sigma^\prime} C^{a a^\prime}_{\sigma \sigma^\prime} |\psi^a_\sigma \rangle | \psi^{a^\prime}_{\sigma^\prime}\rangle |\sigma \sigma^\prime\rangle_{a a^\prime},\,\,\,  \sum_{\sigma, \sigma^\prime} |C^{a a^\prime}_{\sigma \sigma^\prime}|^2 = 1.
 \end{displaymath}
 which is entangled unless $C^{a a^\prime}_{\sigma \sigma^\prime} =  c^a_\sigma   c^{a^\prime}_{\sigma^\prime}$, and where by $c^a_\sigma$ (respectively, $c^{a^\prime}_{\sigma^\prime})$, we mean the coefficient of the $\sigma$ component of the pseudo-spin of molecule $a$ (respectively, $\sigma^\prime$ component of $a^\prime$).

 Chemical bonding of the molecules is equivalent to collapse of $| \Psi_{a a^\prime} \rangle$ onto a total pseudo-spin $0$ state. In this case both molecules melt, releasing calcium, which in turn stimulates further glutamate release into the synaptic cleft, impacting the firing behavior of the \emph{post}-synaptic neuron. The probability of bonding is given by 
\begin{equation} \label{eq:bond}
P^{a a^\prime}_{\text{react}} = \sum_{\sigma, \sigma^\prime} |C^{a a^\prime}_{\sigma \sigma^\prime}|^2 \delta_{\sigma + \sigma^\prime, 0},
\end{equation}
where $\delta$ denotes the Kronecker delta function.

Now let $| \Psi_{a a^\prime} \rangle \otimes | \Psi_{b b^\prime} \rangle$ be the state encoding two pairs of Posner molecules $(a,a^\prime)$ and $(b, b^\prime)$ with $(a,b)$ in neuron 1, and $(a^\prime, b^\prime)$ in neuron 2, with $(a,a^\prime)$ entangled,  as are  $(b,b^\prime)$.  Let $r = 1$ if $a,b$ bind and  $0$ otherwise, and similarly let $r^\prime = 1$ if $a^\prime, b^\prime$ bind and $0$ otherwise. Then the joint probability of a given combination $r, r^\prime$ is given by 
\begin{equation}\label{eq:entangled}
P_{r r^\prime} = \sum_{\sigma_a \sigma_{a^\prime} }\sum_{\sigma_b \sigma_{b^\prime}} |C^{a a^\prime}_{\sigma_a \sigma_a^\prime}|^2 |C^{b b^\prime}_{\sigma_b \sigma^\prime_b}|^2 g_r(\sigma_a, \sigma_b) g_{r^\prime}(\sigma_{a^\prime}, \sigma_{b^\prime})
\end{equation}
where $g_1(\sigma, \sigma^\prime) = \delta_{\sigma + \sigma^\prime,0}$ and $g_0 = 1 - g_1$. 

Fisher defines an ``entanglement measure''  $\mathcal{E} = [\delta r \delta r^\prime]$ where $\delta r := r - [r]$ and $[f_{r r^\prime}] := \sum{r r^\prime}P_{r r^\prime}f_{r r^\prime}$.  If $\mathcal{E} > 0$ then the two binding reactions themselves are positively correlated by virtue of quantum entanglement, and if $\mathcal{E} < 0$ then they are anti-correlated.  We seek to capture this feature of interneuronal entanglement in our definition of a ``Posner'' network, which follows in the next section.

\section{A quantum neural network for Posner computing}

We now define a quantum neural network inspired by the elements of the preceding section. For convenience and to fix notation,  we recall that a feed-forward, hard-threshold perceptron network \cite{Russell}, may be thought of as a map from $\mathbb{Z}_2^n\rightarrow \mathbb{Z}_2^m$, given by a direct acyclic graph $G$ with vertices $v_{ij}$,  $i = 1, \cdots, n_{L_j}$, organized into layers $L_j$, $j = 1, \cdots k$, with edges $e_{ij, i^\prime (j+1)}$ running only between layers $L_j$ and $L_{j+1}$. For every layer $L_j$ aside from the first, each $v_{i^\prime j}$ takes a value given by a binary-valued function $g_{i^\prime j} (\sum w_{i (j-1),i^\prime j} v_{i (j-1)}  + d_i^\prime)$, where the sum is taken over all $v_i$ in layer $L_{j-1}$ such that there is an edge $e_{i(j-1),i^\prime j}$ from $v_{i (j-1)}$ to $v_{i^\prime j}$, and the $w_{i (j-1),i^\prime j}$ are weights assigned to the respective edges, and $d_i$ is a constant. To evaluate the map on a binary input $b_1,b_2 , \cdots, b_n$, we set the $v_{i1} = b_i$. Iteratively evaluating each successive layer in turn, given the inputs provided by the preceding layer, the corresponding output will then be given by the values of the units $v_{1 k}, \cdots , v_{m k}$ in the last layer $L_k$.  To obtain our quantum neural network, we modify this classical neural network in the following way:

\defn \label{def:SPN} A \emph{Simple Posner $\mathcal{R}$-Network} (SPN-$\mathcal{R}$)  is a feed-forward, hard-threshold perceptron network as described above, with the following amendations. For at least some units $v_{i_1 j}, \cdots, v_{i_p j}$  in at least one layer $L_j$:

\begin{enumerate}
\item  The functions $g_{ij}$ are replaced with a single quantum circuit \cite{Nielsen} on $p$ qubits, consisting of a single unitary transformation $U = U_{(i_1, \cdots, i_p), j}: \mathbb{C}^{2^{ \otimes_p}} \rightarrow \mathbb{C}^{2^{\otimes_p}}$, followed by a projective measurement onto a computational basis state. The possible values of $v_{i_1 j}, \cdots, v_{i_p j}$ form the $2^p$ computational basis vectors $b_q$ of $\mathbb{C}^{2^{\otimes_p}}$. The circuit $U$ must be constructed from a specific (not necessarily universal) gate set $\mathcal{R}$. 

\item Evaluating $U$ on a given basis vector $b$ produces a vector $v = \sum_{q = 1}^{2^p} c_q b_q$. Randomly projecting onto a basis vector $b_q$,  where $b_q$ is chosen with probability $|c_q|^2$, determines the outputs of the $v_{i_1 j}, \cdots, v_{i_p j}$, where the $r$-th bit of $b_q$ is the value assigned to $v_{r j}$. 

\item Each $v_{i j} \in \{ v_{i_1 j}, \cdots, v_{i_p j} \}$ is connected to exactly one unit $v_{i^\prime (j-1)}$ in layer $L_{j-1}$, and the connecting edge has unit weight. The basis vector on which $U$ is evaluated is determined by the values of the $v_{i^\prime (j-1)}$.
\end{enumerate}

An example of an SPN-$\mathcal{R}$ appears in figure \ref{fig:SPN1}. \\
\begin{figure}
  \centering
  \includegraphics[width=.7 \linewidth]{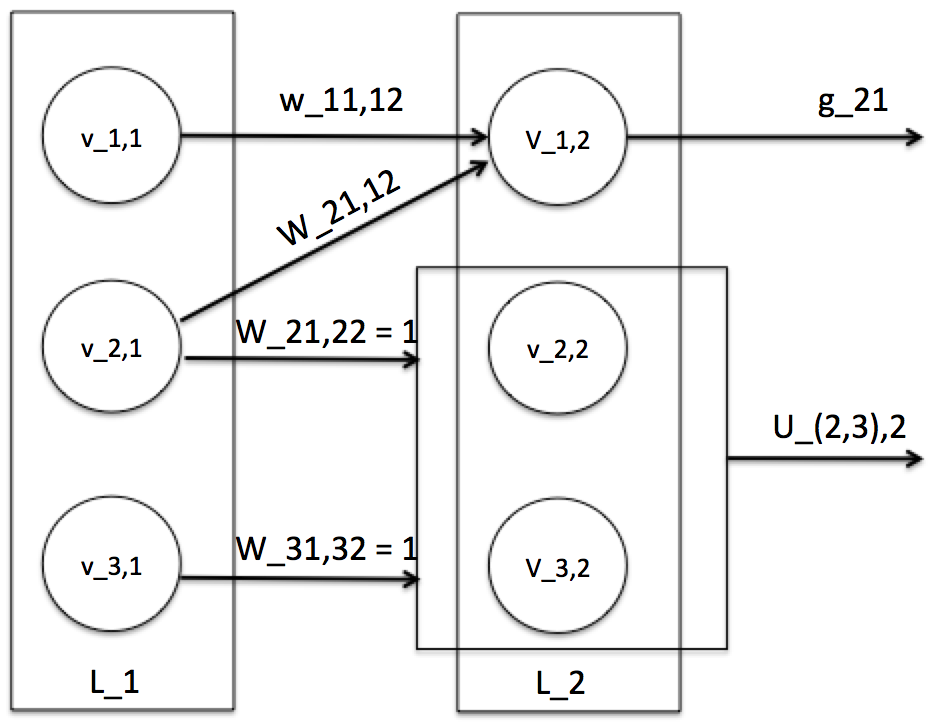}
  \captionsetup{width=.7\linewidth}
  \caption{\small{A two-layer SPN-$\mathcal{R}$. The outputs of units two and three in layer 2 are determined by a unitary transformation on a state corresponding to the values of units two and three in layer one, followed by a projective measurement.}}
  \label{fig:SPN1}
\end{figure}

It follows from definition \ref{def:SPN} that a quantum circuit implementing a given unitary transformation $\mathcal{U}$ on a quantum register initialized from $n$ classical bits, followed by a projective measurement onto an $n$-bit basis state from which an $n$-bit classical output is read, may be viewed as a two-layer SPN-$\mathcal{R}$. More (perhaps) interesting constructions, in terms of the Quipper implementations to which they lead, arise from restricting the gate set $\mathcal{R}$, or combining such two-layer SPN-$\mathcal{R}$ . Clearly there are SPN-$\mathcal{R}$s that cannot be represented by a single unitary transformation: for example, the three-layer SPN-$\mathcal{R}$ coupling two copies of a single-qubit quantum circuit performing a Hadamard transformation followed by a projective measurement in the $\ket{0}, \ket{1}$ basis (see figure \ref{fig:SPN2}).

\begin{figure}
  \centering
  \includegraphics[width=.7 \linewidth]{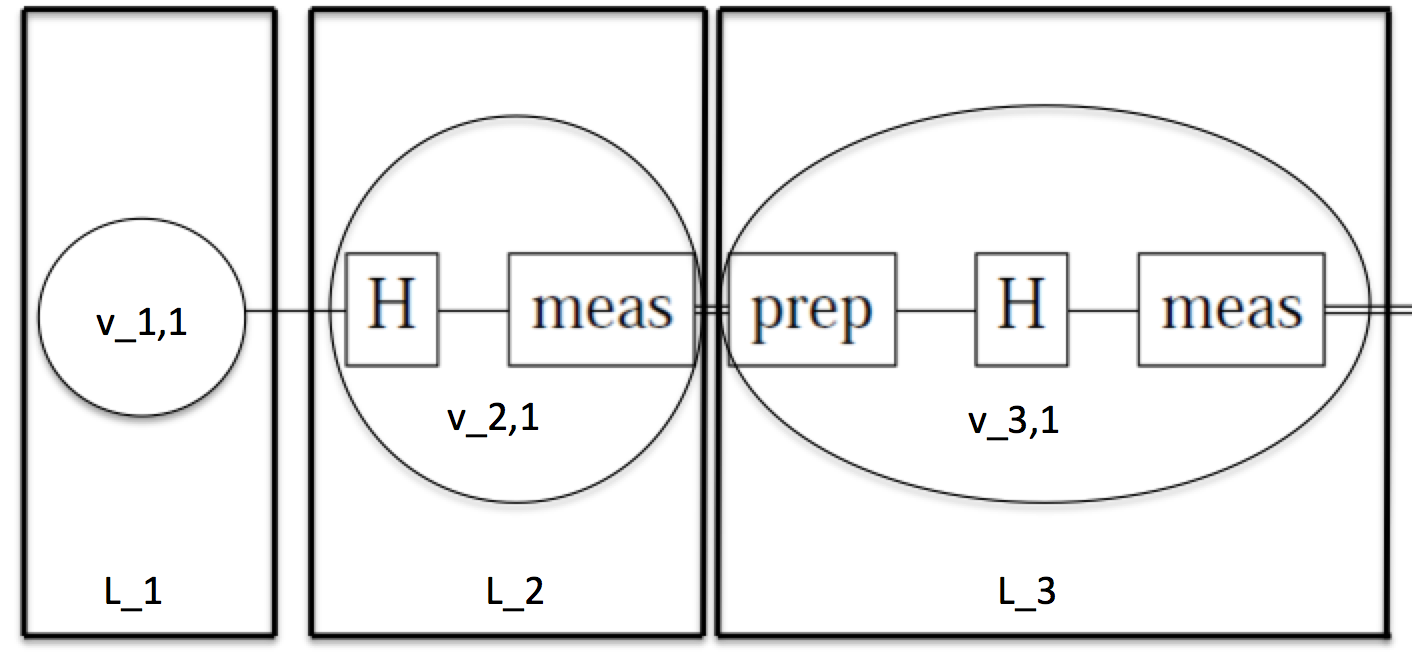}
  \captionsetup{width=.7\linewidth}
  \caption{\small{A three-layer SPN-$\mathcal{R}$ consisting of a Hadamard transformation (in $L_2$) on a singe qubit initialized to the output of layer $L_1$, followed a projective measurement, output of which passes to $L_3$, where another Hadamard transform is performed, followed by another measurement. The SPN-$\mathcal{R}$ does not correspond to a single unitary transformation, because of the projective measurement in layer $L_2$.}}
  \label{fig:SPN2}
\end{figure}

\subsection{SPN-$\mathcal{R}$s can learn arbitrary classical functions} 

To illustrate the capability (and limitations) of the simplest SPN-$\mathcal{R}$ configuration, and to motivate the construction of our Quipper circuit, we present the following theorem.

\begin{thm} \label{thm:SPN} Let $n$ be an integer such that there exists an angle $\theta \in (0, \frac{\pi}{2} )$ such that 
\begin{equation}\label{equ:theta}
\frac{1}{\sqrt{2}^n} \big( \text{cos}^{2^n-1}(\theta) + \sum_{i=0}^{2^n-2}\text{cos}^i(\theta)\text{sin}(\theta) \big) > \frac{1}{\sqrt{2}}. 
\end{equation}

Let $\mathcal{R}$ be the quantum gate set consisting of the Hadamard gate, swap gate, and controlled, two-level gates which act on basis states $\ket{0}, \ket{i}$ and are built from $\left( \begin{array}{cc} cos(\theta) & sin(\theta)  \\ sin(\theta) & -cos(\theta) \end{array} \right)$. Then for any surjective function $f: \mathbb{Z}_2^n \rightarrow \mathbb{Z}_2^n$ taking $n$-bit binary strings to $n$-bit binary strings, there is a two-layer $SPN-\mathcal{R}$ which can ``learn'' $f$, in that given an input $b$, it computes $f(b)$ with probability greater than $\frac{1}{2}$.

\end{thm}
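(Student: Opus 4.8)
The plan is to exploit the fact that a surjection $f\colon\mathbb{Z}_2^n\to\mathbb{Z}_2^n$ between finite sets of equal cardinality is automatically a bijection, hence a permutation of the $2^n$ computational basis states, represented by a permutation matrix $P_f$ with $P_f\ket{b}=\ket{f(b)}$. By Definition~\ref{def:SPN}, a two-layer SPN-$\mathcal{R}$ that learns $f$ amounts to a \emph{single} unitary $U$, assembled from the gate set $\mathcal{R}$, for which the projective measurement on input $\ket{b}$ produces the outcome $\ket{f(b)}$ with probability $|\bra{f(b)}U\ket{b}|^2$. The goal is therefore to exhibit a $U$ in the group generated by $\mathcal{R}$ with $|\bra{f(b)}U\ket{b}|^2>\tfrac{1}{2}$ simultaneously for every $b$. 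Since $P_f$ itself achieves probability $1$, the whole problem reduces to realising $P_f$, or a sufficiently accurate surrogate, from the restricted gates.

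First I would reduce an arbitrary permutation to the elementary pieces $\mathcal{R}$ can produce. The swap gate generates the qubit-position permutations exactly, while each controlled two-level gate built from $\left(\begin{smallmatrix}\cos\theta & \sin\theta \\ \sin\theta & -\cos\theta\end{smallmatrix}\right)$ degenerates, as $\theta\to\tfrac{\pi}{2}$, into the transposition of $\ket{0}$ with $\ket{i}$; these transpositions generate the full symmetric group on the $2^n$ basis states, so in the limit every $P_f$ becomes available. The crucial calculation is the explicit form of the cascade $C=R_{0,2^n-1}\cdots R_{02}R_{01}$ (applied right-to-left) of two-level gates coupling $\ket{0}$ successively to $\ket{1},\ket{2},\dots,\ket{2^n-1}$: applied to $\ket{0}$ it produces amplitude $\cos^{2^n-1}\theta$ on $\ket{0}$ and $\cos^{i-1}\theta\sin\theta$ on $\ket{i}$, so that $C$ approximates the cyclic shift $i\mapsto i+1\pmod{2^n}$. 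I would then recognise the left-hand side of~\eqref{equ:theta} as the overlap $\bra{+}C\ket{0}$, where $\ket{+}=H^{\otimes n}\ket{0}=\frac{1}{\sqrt{2}^{\,n}}\sum_q\ket{q}$ is the uniform superposition supplied by the Hadamard gates, so that condition~\eqref{equ:theta} reads exactly $\bra{+}C\ket{0}>\tfrac{1}{\sqrt{2}}$, i.e.\ a success amplitude whose square exceeds $\tfrac{1}{2}$.

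To turn this one good overlap into a construction valid for every input and target, I would conjugate the cascade by Hadamard layers together with the exactly realisable qubit permutations, selecting the conjugating permutations from the data of $f$ so that, for each $b$, the target amplitude $\bra{f(b)}U\ket{b}$ collapses to the single overlap $\bra{+}C\ket{0}$ (or to a quantity bounded below by it). Condition~\eqref{equ:theta} would then certify probability above $\tfrac{1}{2}$ uniformly in $b$.

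The main obstacle is precisely this uniformity. Because $\theta$ is strictly below $\tfrac{\pi}{2}$, the two-level gates are only approximate transpositions and leak amplitude onto non-target basis states, and this leakage accumulates when several gates are composed to build $P_f$. The heart of the argument is a worst-case estimate showing that, over all inputs, the surviving target amplitude is minimised by the full cascade passing through all $2^n-1$ gates, which is what produces the geometric sum appearing in~\eqref{equ:theta}; the single scalar inequality then controls the entire family of $2^n$ success probabilities at once. Verifying that the Hadamard-and-swap conjugation genuinely reduces every per-input amplitude to this worst case, with no residual cross-terms weakening the bound, is where I expect the real difficulty to lie, and the normalisation identity $\cos^{2(2^n-1)}\theta+\sin^2\theta\sum_{i=0}^{2^n-2}\cos^{2i}\theta=1$ should serve to keep the amplitude bookkeeping honest.
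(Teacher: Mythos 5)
Your opening moves are sound: $f$ is indeed a bijection, your cascade amplitudes $\cos^{2^n-1}\theta$ on $\ket{0}$ and $\cos^{i-1}\theta\sin\theta$ on $\ket{i}$ are correct, and your identification of the left-hand side of (\ref{equ:theta}) as the overlap $\bra{+}C\ket{0}$ is exactly the paper's biasing computation read in transposed order (the paper applies the Hadamards first and the cascade second, tracking the $\ket{0}$ amplitude). But the construction you try to build around this has a genuine gap, and it is not the one you flagged. The paper never synthesizes the permutation $P_f$, exactly or approximately. Its circuit uses $2n$ qubits: $n$ input qubits initialized to $b$, which are never transformed and serve only as controls, and $n$ work qubits on which $H^{\otimes n}$ followed by the bias cascade concentrates amplitude $A$ (equal to the left side of (\ref{equ:theta})) onto $\ket{0}$. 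Then, for each truth-table pair $(b_{\text{input}}, b_{\text{output}})$, it applies a two-level $X$ swapping $\ket{0}\leftrightarrow\ket{b_{\text{output}}}$ on the work register, multi-controlled on the input register being exactly $b_{\text{input}}$. For a given input $b$ exactly one of these $2^n$ controlled gates fires, so the amplitude $A$ is routed \emph{exactly} onto $\ket{f(b)}$ and measurement of the work register succeeds with probability $A^2 > \frac{1}{2}$. There is no leakage, no accumulation of error over a long product, and hence no ``worst-case estimate'' left to prove; inequality (\ref{equ:theta}) enters once, as the literal success amplitude.

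Your route, by contrast, cannot be completed. First, the limit $\theta\to\frac{\pi}{2}$ in which your two-level gates become transpositions is incompatible with (\ref{equ:theta}): at $\theta=\frac{\pi}{2}$ only the $i=0$ term of the sum survives and the left side equals $\frac{1}{\sqrt{2}^n}\leq\frac{1}{2}$ for $n\geq 2$, so any admissible $\theta$ is necessarily far from $\frac{\pi}{2}$ (the paper's implementation takes $\theta$ on the order of $\pi\,2^{-n/2-1}$, where the gate is nowhere near a transposition and tends toward a diagonal $Z$-type gate as $n$ grows). So ``approximate transpositions with controllable leakage'' are simply not available to you. Second, the qubit-position swaps you invoke generate only the $n!$ permutations of wires, a vanishing fraction of the $(2^n)!$ permutations of basis states, and conjugating the cascade by these and by Hadamard layers gives no mechanism for collapsing $\bra{f(b)}U\ket{b}$ to $\bra{+}C\ket{0}$ for an arbitrary bijection $f$, let alone simultaneously for all $b$; this step of your plan is asserted, not constructed, and it is where the proposal fails. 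The missing idea is the truth-table lookup on a doubled register: controls placed on a second, classically initialized set of qubits, with the ``swap gate'' of the theorem read as the (controlled) two-level swap of the basis states $\ket{0},\ket{i}$ --- as in the controlled two-level $X$ gates of the appendix code --- rather than as the two-qubit wire-SWAP. Once those controlled swaps keyed to the pairs $(b_{\text{input}}, b_{\text{output}})$ are in play, the uniformity problem you identified as the heart of the matter disappears entirely.
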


\begin{proof}

We construct the SPN-$\mathcal{R}$  in following way. A surjective function $f$ with range $\mathbb{Z}_2^n$ takes $2^n$ distinct values, corresponding to the basis vectors $\ket{i}$ of $\mathbb{C}^{2^{\otimes_n}}$. Taking $\ket{0} \in  \mathbb{C}^{2^{\otimes_n}}$ and applying a Hadamard transform yields an even superposition $\ket{\psi}$ of all basis vectors. To bias this state towards a given basis vector $\ket{k}$ corresponding to $f = k$, one can apply, for each $j$,  a $y$-axis rotation of fixed angle $\theta$ towards $\ket{k}$ in the Bloch sphere corresponding to  $\{ \ket{k}, \ket{j} \}$. For a projective measurement on $\ket{\psi}$ to select $\ket{k}$ with odds greater than $\frac{1}{2}$, $\theta$ must satisfy (\ref{equ:theta}); the left-hand side of the inequality represents the overall transformation of the co-efficient of $\ket{k}$ within the basis expansion of $\ket{\psi}$.

Couched in terms of ``learning'', we take the training data to be the set of $2^n$ pairs of $n$-digit binary strings $\{b_{\text{input}}, b_{\text{output}} \}$ defining $f$, where  $b_{\text{input}}$ is an input to $f$ and $b_{\text{output}}$ is the associated output. The SPN-$\mathcal{R}$ takes the form of a quantum circuit accepting $n$ classical bits, on which it performs a single (multi-gate) unitary operation, followed by a projective measurement onto the computational basis. The quantum circuit consists of $n$ quantum input bits (qubits) and $n$ quantum ancilla qubits. The circuit first initializes $n$ ancilla qbits  to the $\ket{0}$ state,  and then performs a Hadamard transformation to take $\ket{0} \rightarrow \psi_1 = \frac{1}{\sqrt{2}^n} \sum_{i=0}^{2^n-1} \ket{i}$. For $U =  \left( \begin{array}{cc} cos(\theta) & sin(\theta)  \\ sin(\theta) & -cos(\theta) \end{array} \right) $, and each $0 < i \leq 2^n-1$, a two-level unitary transformation $U_{0,i}$ is performed on the space spanned by the basis states $\ket{0} \text{and}  \ket{i}$ ( $U_{0,i}$ acts trivially on all other subspaces). Collectively, the $U_{0,i}$ take $\psi_1$ to a state $\psi_2$ in which the $\ket{0}$ component has amplitude $\frac{1}{\sqrt{2}^n} \big( \text{cos}^{2^n-1}(\theta) + \sum_{i=0}^{2^n-2}\text{cos}^i(\theta)\text{sin}(\theta) \big)$. Then, for each $b_{\text{output}}$, the circuit performs a controlled, two-level $X$ gate swapping $\ket{0}$ and $\ket{b_{\text{output}}}$, the latter being the basis state whose $i$-th qubit corresponds to the value of the $i$-th bit of $b_{\text{output}}$, and where the controls are given by $b_{\text{input}}$ and placed on the input qubits. Hence the amplitude of $\ket{0}$ and  $\ket{b_{\text{output}}}$ are swapped precisely when the input qubits are initialized according to $b_{\text{input}}$.

\end{proof}

A Quipper-based implementation of the above algorithm may be found at \cite{Ulrich} and is reproduced in the appendix. The implementation uses  $\theta = \frac{\pi}{2^{(n/2)+1}}$ for even $n$, and  $\theta = \frac{\pi}{2^{((n-1)/2)+1}}$ for odd $n$. These choices of $\theta$ have been computer-verified to satisfy (\ref{equ:theta}) for $1 < n \leq 26$. Figure \ref{fig:SPN_comp_circuit}  depicts a Quipper circuit printout for the ``complement'' function $f :  (b_1, b_2) \rightarrow (b_1 \oplus 1, b_2 \oplus 1)$.  Note that in the diagram, $U$, a $y$-rotation,  is composed as a basis change followed by a  $z$-rotation, to accommodate the Quipper gate set. 

\begin{figure}
  \centering
  \includegraphics[width=.85 \linewidth]{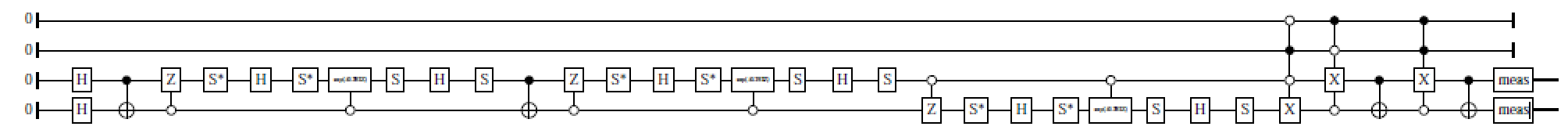}
  \captionsetup{width=.85 \linewidth}
  \caption{\small{Quantum circuit computing the ``complement'' function for the two qubit case.}}
  \label{fig:SPN_comp_circuit}
\end{figure}

\section{Summary and avenues for further research}

With the goal of increasing the corpus of extant Quipper programs, and inspired by recent work on molecular substrates for quantum effects having cognitive impact, we have suggested a model for a ``Posner'' neural network, which is a perceptron network in which the activation functions of some individual units are replaced with a quantum circuit, expressed in a specific gate set, and  implementing a single unitary transformation, followed by a projective measurement. The simplest form of such a network is simply an individual (restricted) quantum circuit itself.  Further, we have presented a quantum algorithm, expressed in the Quipper programming language, by which a particular instance of such a neural network can probabilistically compute any function mapping $n$-bit inputs to $n$-bit outputs.  Whether more interesting Posner networks provide additional capabilities over the most trivial instances, or their classical counterparts, remains an open question for the author.

\section{Appendix}
We present here the Quipper source code implementing the circuit described in the text.
\begin{lstlisting}[language=Haskell]

{-


author: James Ulrich julrich@cyberpointllc.com


Copyright (c) 2016 CyberPoint International LLC


Permission is hereby granted, free of charge, to any person obtaining a copy of
this software and associated documentation files (the "Software"), to deal in 
the Software without restriction, including without limitation the rights to 
use, copy, modify, merge, publish, distribute, sublicense, and/or sell copies of
the Software, and to permit persons to whom the Software is furnished to do so, 
subject to the following conditions:

The above copyright notice and this permission notice shall be included in all 
copies or substantial portions of the Software.

THE SOFTWARE IS PROVIDED "AS IS", WITHOUT WARRANTY OF ANY KIND, EXPRESS OR 
IMPLIED, INCLUDING BUT NOT LIMITED TO THE WARRANTIES OF MERCHANTABILITY, FITNESS 
FOR A PARTICULAR PURPOSE AND NONINFRINGEMENT. IN NO EVENT SHALL THE AUTHORS OR 
COPYRIGHT HOLDERS BE LIABLE FOR ANY CLAIM, DAMAGES OR OTHER LIABILITY, WHETHER
IN AN ACTION OF CONTRACT, TORT OR OTHERWISE, ARISING FROM, OUT OF OR IN 
CONNECTION WITH THE SOFTWARE OR THE USE OR OTHER DEALINGS IN THE SOFTWARE.
-}

import Quipper
import Data.List.Split
import QuipperLib.Synthesis
import Quantum.Synthesis.Ring
import Quantum.Synthesis.Matrix
import Quantum.Synthesis.MultiQubitSynthesis
import Debug.Trace
import QuipperLib.Simulation
import System.Random
import Data.Map (Map)
import qualified Data.Map as Map
import Quipper.QData

{-

This program will take as input a 1-1 map of key-value pairs, where the key is 
an n-digit binary number (an "input"), and the value is an n-digit binary 
number (the corresponding "output"). It will produce as output a quantum 
circuit, which with high probability, maps a given input to the corresponding 
output.  

We assume program is invoked as 'posner2 < func.txt' where func.txt is a file 
of the  form (here using example n=2):
        
    00:00   
    01:01
    10:10
    11:11
       01
   <mode>
                   
the first 2^n lines define a bijective function from bitstrings of length n to 
bitstrings of length n; the digits to the left of the ':' are the inputs. the 
lines must appear in increasing order of  outputs, viewed as binary strings 
(e.g. 00, 01, 10, 11, for n=4). the next to last line 
is a specific input to be evaluated by simulating the circuit (used for mode "S"
only). the last line is a run mode: 
    
    A=show amplitudes via sim_amps (supported for n <= 2 only), for input |0>.
    P=print_generic.
    S=simulate via run_generic.


-- major algorithmic steps (in concept, if not actual code) are:

1. ingest a string giving the map of (input,output) pairs and 
   store in a suitable data structure in memory.
   
2. define an n-quibit circuit that takes the |0> state to 
   one that is a superposition S of all possible n-qubit basis states, 
   but for which the state |0> has the largest probability amplitude. We do
   by first applying a hadamard gate, and then applying the two level "bias" 
   gate  [[cos(pi/f(n)), sin(pi/f(n))], [sin(pi/f(n)), -cos(pi/f(n))]] 
   successively to states (|0>,|1>), (|0>,|2>), ...., (|0>,|2^(n-1)>),
   for suitable f(n).
        
   
3. add n ancilla qubits to the circuit. then for each output i given by the 
   map, add a controlled swap gate exchanging the weight of the |0> component 
   of S with that of the component corresponding to |i>, with controls given 
   by the input producing |i>. the controls are placed on the ancillas.
   
4. to evaluate the circuit for a given input, the ancillas are initialized to
   the input.
    
-}

-- declare a map data type entry
data InputOutput = InputOutput {
   
   input :: [Bool],
   output :: [Bool]
} deriving (Show)

-- declare a function truth table (input->output map defining a function)
data TruthTable = TruthTable {
    entries :: [InputOutput]
} deriving (Show)
 

-- produce a quantum circuit from the function defined by the input 
processInput :: [String] -> ([Qubit] -> Circ [Qubit])
processInput table = do

    -- from input lines defining a function (as a text-based truth table of 
    -- inputs->outputs), get list of input,output string pairs
    let u = map (\x -> splitOn ":" x) table
    
    -- from list of input, output string pairs, get list of 
    -- input, output boolean list pairs, make internal truth table from that
    let v2 = map (\x -> (str_to_blist(x !! 0),str_to_blist(x !! 1)) ) u  
    let g2 = map (\x -> InputOutput (fst x) (snd x)) v2      
    let tt = TruthTable g2   

    -- make quantum circuit based on table 
    let circ = tt_circuit tt
    circ
    

-- convert a text string of 0s and 1s to a boolean array   
str_to_blist :: String -> [Bool]
str_to_blist lc = 
    let bl = map (\x -> case x of '1' -> True
                                  '0' -> False) lc
    in bl                             
                            

--  A basis change to obtain y-rotations from z-rotations. 
--  courtesy Neil J. Ross.                    
y_to_Z :: Qubit -> Circ Qubit
y_to_Z q = do
  q <- gate_S_inv q
  q <- gate_H q
  q <- gate_S_inv q
  return q


{-
the "bias" gate R_Y(pi/f(n))*Z; see Nielsen and Chuang, Quantum Computation
and Quantum Information, pp.175-176. Decomposition in terms of Quipper gates 
courtesy Neil J. Ross.
-}
my_biasUt :: Timestep -> Qubit -> Circ ()
my_biasUt t q = do
  
    -- q <- expZt pi q
    q <- gate_Z q
    q <- with_computed (y_to_Z q) (expZt t)
    return ()


apply_controlled_X :: ControlList -> Qubit -> Circ()
apply_controlled_X controls q = do
    q <- gate_X q `controlled` controls 
    return ()
    
-- repeatedly apply our two-level bias gate  as 
-- a U_{0,i} gate for i, i-1, i-2, ..., 1
bias_gate_recurse  :: (Qubit-> Circ()) -> Int -> [Qubit] -> Circ()
bias_gate_recurse g 0 qbits =  return()
bias_gate_recurse g i qbits = do 
    twolevel 0 i qbits g
    bias_gate_recurse g (i-1) qbits  

-- construct a control list for a given set of function inputs, expressed 
-- as bools
control_recurse :: ControlList -> Int -> [Qubit] -> [Bool] -> ControlList
control_recurse controls idx qubits bools = 
    case (idx == 0) of 
        True -> controls
        False -> do 
            let new_controls = controls .&&. (qubits !! idx) .==. (bools !! idx)    
            control_recurse new_controls (idx-1) qubits bools

-- apply a controlled X gate as U_{0,idx} with controls on inputs corresponding
-- to output state |idx> for thruth table entries idx = 1....n                  
flip_gate_recurse ::  TruthTable -> Int -> [Qubit] -> [Qubit] -> Circ()
flip_gate_recurse tt idx ancillas qbits = do

     
    -- get the input booleans for the output given by idx
    let pos_neg = input ( (entries tt) !! idx ) 

    -- now build a control list based on the booleans. we recurse to build it 
    -- up wire by wire using the .&&. operator    
    let init_controls = (ancillas !! 0) .==. (pos_neg !! 0)    
    let controls = control_recurse init_controls (length(pos_neg) -1 ) 
                                    ancillas pos_neg
        
    let g = apply_controlled_X controls
    twolevel 0 idx qbits g

    case (idx == (length (entries tt) - 1)) of 
        True -> return()
        False -> flip_gate_recurse tt (idx + 1) ancillas qbits

       
-- construct a circuit that computes the output for the given input and truth 
-- table
tt_circuit :: TruthTable -> [Qubit] -> Circ [Qubit]
tt_circuit tt ancilla = do

    -- n = number of input bits
    let n = length (input ( (entries tt) !! 0) )
    
    -- create n circuit bits and set them all to 0s                                
    qbits <- qinit (replicate n False)
    
    -- now perform a hadamard on the qbits
    mapUnary hadamard qbits
   
    -- apply our 2-level bias gate to (|0>,|1>), (|0>,|2>), ..., (|0>,|2^(n-1)>)         
    let p = pi/(get_divisor n)
    let g = my_biasUt p  
    
    -- step 2
    bias_gate_recurse g (2^n -1) qbits
    
    -- step 3
    flip_gate_recurse tt 1 ancilla qbits
  
    qdiscard ancilla
    return qbits

get_divisor :: Int -> Double
get_divisor n = case (even n) of 
                    True -> 2^((div n 2) + 1) 
                    False ->  2^((div (n-1) 2) + 1) 
                     
    
-- here follow some utility routines to help with debugging    
tester :: [Bool]  -> ([Qubit] -> Circ [Qubit]) -> Circ [Qubit]
tester ibools circ = do
    ibits <- qinit ibools 
    circ ibits
    

-- a sample input state for a 2-qbit circuit
myMap2 :: Map [Bool] (Cplx Double)
myMap2 = Map.fromList (map makePair
    [[False, False],[False,True], [True,False],[True,True]])
    where makePair x =  case x of [False,False] -> ([True, True],1.0)
                                  [False, True] -> ([False,True], 0.0)
                                  [True, False] -> ([True,False], 0.0)
                                  [True, True]  -> ([False,False],  0.0)
                                                                          
-- and here's main!                                        
main = do

    s <- getContents  
    let t = lines s 
    
    -- separate the lines defining the function to be evaluated from the 
    -- input on which it is to be evaluated, and the mode
    let func_def = take ((length t)- 2) t
    let input = t !!  ((length t) - 2)
    let mode = t !! ((length t) - 1)
    let ibools = str_to_blist input
    let ibits = map (\x -> cinit x) ibools
    
    -- get a quantum circuit that evaluates the function given an input 
    let qa = processInput func_def

    case mode of
        "A" -> do
            -- WE USE THIS CODE TO GET OUTPUT STATE AMPLITUDES FOR THE 
            -- FUNCTION DEFINED BY STDIN, AND THE INPUT HARDCODED IN myMap, 
            -- myMap2  RTNS 
            case ((length ibools) == 2) of 
                True -> do
                    let m = myMap2
                    t <- randomIO
                    putStrLn("here is the input map: " ++  Map.showTree(m))      
                    let res2 = sim_amps (mkStdGen t) qa m     
                    let s3 = Map.showTree(res2)      
                    putStrLn("here is the output map: ")
                    putStrLn(s3)
                False -> do     
                    putStrLn("option A only supported for n=2")

        "P" -> do
            -- WE USE THIS CODE TO PRINT THE CIRCUIT, GIVEN THE FUNCTION
            -- DEFINED BY STDIN
            let circ = tester ibools qa 
            print_generic Preview circ
        
        "S" -> do   
        
            case ((length ibools) >= 2) of 
                True -> do
                
                    -- WE USE THIS CODE TO RUN MANY SIMULATION 
                    -- TRIALS OVER THE CIRCUIT, GIVEN THE 
                    -- FUNCTION AND INPUT DEFINED BY STDIN
                    for 1 100 1 $ \i -> do  
                        t <- randomIO
                        let res = run_generic (mkStdGen t) (1.0::Double) qa ibools
                        putStrLn(show(res))
                    endfor
                    
                False -> do     
                    putStrLn("option S only supported for n>=2")
   


\end{lstlisting}

\end{document}